\newcommand{\df}{$\mathrm{detection\_flag}\ $}
\newtheorem{theorem}{Theorem}[]
\newtheorem{assump}{Assumption}
\newtheorem{remark}{Remark}
\definecolor{color1}{rgb}{0,0,0}
\def\BibTeX{{\rm B\kern-.05em{\sc i\kern-.025em b}\kern-.08em
    T\kern-.1667em\lower.7ex\hbox{E}\kern-.125emX}}
\begin{document}
\title{A Distributed Malicious Agent Detection Scheme for Resilient Power Apportioning in Microgrids 
\thanks{This research is supported by the United States Department of Energy via grant number DE-CR$0000040$.
}
}
\author{\IEEEauthorblockN{Vivek Khatana$^{1}$, Soham Chakraborty$^{2}$, Govind Saraswat$^{3}$, Sourav Patel$^{4}$, and Murti V. Salapaka$^{5}$}
\IEEEauthorblockA{Department of Electrical and Computer Engineering, University of Minnesota, USA
\\ \{$^1$khata010, $^2$chakr138, $^3$saras006,  $^5$murtis\}@umn.edu, \{$^4$patel292@alumni.umn.edu\} }}
\maketitle

\begin{abstract}
We consider the framework of distributed aggregation of Distributed Energy Resources (DERs) in power networks to provide ancillary services to the power grid. Existing aggregation schemes work under the assumption of trust and honest behavior of the DERs and can suffer when that is not the case. In this article, we develop a distributed detection scheme that allows the DERs to detect and isolate the maliciously behaving DERs. We propose a model for the maliciously behaving DERs and show that the proposed distributed scheme leads to the detection of the malicious DERs. Further, augmented with the distributed power apportioning algorithm the proposed scheme provides a framework for \textit{resilient} distributed power apportioning for ancillary service dispatch in power networks. A controller-hardware-in-the-loop (CHIL) experimental setup is developed to evaluate the performance of the proposed resilient distributed power apportioning scheme on an 8-commercial building distribution network (Central Core) connected to a 55 bus distribution network (External Power Network) based on the University of Minnesota Campus. A diversity of DERs and loads are included in the network to generalize the applicability of the framework. The experimental results corroborate the efficacy 
of the proposed resilient distributed power apportioning for ancillary service dispatch in power networks.\\[1ex]
\textbf{\textit{Keywords---Ancillary services, distributed power apportioning, distributed intruder detection, cybersecurity, resilience, microgrids.}} 
\end{abstract}
% \begin{IEEEkeywords}
% Microgrids, inverters, droop control, distributed control, secondary control, distributed optimization. 
% \end{IEEEkeywords}
\IEEEpeerreviewmaketitle
\section{Introduction}

With the proliferation of power electronic converter technologies, renewable energy resources, and battery storage systems, the modern microgrid network is transitioning towards integrating a large number of smaller distributed energy resources (DERs) scattered throughout the network. Further, with the presence of the DERs the dispatch methodologies for deploying ancillary services
for reliable operation of the modern grid are changing
significantly (for example, the Electric Reliability Council of Texas’s (ERCOT’s) responsive reserve
services (RRS)), thereby demonstrating the viability of widescale adoption of DERs for such applications \cite{ercot_1,ercot_2}. To achieve grid ancillary services a large number of DERs must be coordinated on a fast timescale. The coordination of multiple DERs presents significant challenges. Several articles (see \cite{unamuno2015hybrid,xi2018power} and references therein) in the literature have introduced centralized approaches wherein a centralized controller at the distribution level sends dispatch commands to local actuators and requires information from all of the DERs. Such centralized approaches lack flexibility 
and scalability, and they require expensive
high-performance computing and high-speed communication
networks to meet the ancillary service requirements satisfactorily. To mitigate these challenges, distributed control approaches are proposed with DERs as a multi-agent system \cite{replace_1, replace_3, replace_2}.  Advantages of distributed approaches include coordination using only local computations and plug-and-play capability. Article \cite{dorfler2017gather} presents a gather-broadcast method, a distributed average
integral method was developed in \cite{zhao2015distributed}; however, these methods are often sensitive 
to gain coefficients and can result in slow convergence to
dispatch outputs. Article \cite{megel2017distributed} develops a distributed approach but does not provide any theoretical guarantees and relies on heuristics to achieve dispatch requests. A distributed consensus-based power apportioning algorithm overcoming the limitations of the above methods is developed in \cite{apportioning1, apportioning2}. We focus on the distributed power apportioning algorithms proposed in articles \cite{apportioning1, apportioning2}.

\subsection{Literature Review}
A key component of the distributed apportioning schemes in \cite{apportioning1, apportioning2} is achieving
consensus between the decision variables of different DERs. The commonly employed consensus algorithms in the literature \cite{apportioning1, apportioning2,switching_top, ADMM_tac} assume an underlying trust and credibility of the participating agents. When factoring in the distributed coordination for ancillary service dispatch, any DER's lack of honesty can skew the resulting decision, which can adversely affect the power network. Thus, it is imperative to employ resilient measures to counter the presence of malicious agents in the network. Several works in the literature have studied the problem of securing the consensus updates \cite{marsh_formalising, toulouse_defense, sundaram_robust, panagou_resilient, intruder_codit}. A trust-based model for multi-agent systems is considered in \cite{marsh_formalising}. The scheme requires repeated interactions between the agents and necessitates gathering a large amount of information from different agents in the network to formulate notions of cooperation or rejection of information. Article \cite{toulouse_defense} devises an intrusion detection system based on network traffic analysis. In contrast, authors in \cite{sundaram_robust, panagou_resilient} assert designing the interaction topology between the agents to achieve an agreement, that is difficult to generalize. Article \cite{intruder_codit} develops a scheme based on the
monotonic properties of global maximum and minimum states of the agents under the consensus algorithm to
detect intrusions. However, the method doesn't isolate the malicious agents. \\
\hspace*{0.1in} Existing distributed schemes for aggregating DERs towards providing ancillary services do not prescribe a resilient operation in the presence of maliciously behaving DERs. Here, we propose a novel distributed intruder detection scheme that enables honest DERs to detect and isolate a maliciously behaving DER in the distributed apportioning algorithm. 

\subsection{Contributions}
\noindent The major contributions of this article are:\\
$1)$ We develop a distributed intruder detection scheme for distributed power apportioning with DERs in microgrids.\\
$2)$ The developed scheme allows the DERs to locally detect and isolate the communication links of the maliciously behaving DERs in their neighborhood. \\
$3)$ Compared to the existing works in the literature the proposed scheme leads to distributed detection and isolation of malicious DERs and is applicable for general interaction topologies. \\
$4)$ The developed scheme has a low computation and communication footprint unlike existing schemes in the literature \cite{sundaram_robust, panagou_resilient} that require a polynomial number of computations and communication in the number of DERs. The proposed algorithm works only in the neighborhood of a DER and thus, scales well due to local interactions. \\

The rest of the article is organized as follows: The system description and details of the distributed power apportioning setup are presented in Section~\ref{sec:system_desc}. Section~\ref{sec:malicious_agent_desc} presents the intrusion model and details the proposed resilient distributed apportioning algorithm. A controller hardware-in-the-loop (CHIL)-based experiment is conducted to evaluate the efficacy of the proposed intruder detection scheme in Section~\ref{sec:results}. In particular, we augment the developed intruder detection algorithm with the distributed power apportioning engine in \cite{apportioning1,apportioning2} towards meeting the ancillary service demand of a microgrid with critical infrastructure. The microgrid is emulated based on the University of Minnesota Twin Cities campus, consisting of $8$ commercial-scale buildings and an external $55$ bus distribution network, with DERs running the power apportioning algorithm, based on the suburb of Minneapolis. (The emulated power network is described in more detail in Section~\ref{sec:microgrid}). The laboratory test results establish the efficacy of the proposed scheme towards resilient distributed apportioning for ancillary service dispatch. Concluding remarks and future direction of research are provided in Section~\ref{sec:conclusion}.

%%%%%%%%%%%%%%%%%%%%%%%%%%%%%%%%%%%%%%%%%%%%%%%%%%%%%%%%%%%%%%%%%%%%
%%%%%%%%%%%%%%%%%%%%%%%%%%%%%%%%%%%%%%%%%%%%%%%%%%%%%%%%%%%%%%%%%%%%
%%%%%%%%%%%%%%%%%%%%%%%%%%%%%%%%%%%%%%%%%%%%%%%%%%%%%%%%%%%%%%%%%%%%
%%%%%%%%%%%%%%%%%%%%%%%%%%%%%%%%%%%%%%%%%%%%%%%%%%%%%%%%%%%%%%%%%%%%
\section{Description of the System}\label{sec:system_desc}
\subsection{Description of the Microgrid}\label{sec:microgrid}
\begin{figure*}[t]
	\centering
	\subfloat[]{\includegraphics[scale=0.505,trim={0cm 0cm 7cm 5cm},clip]{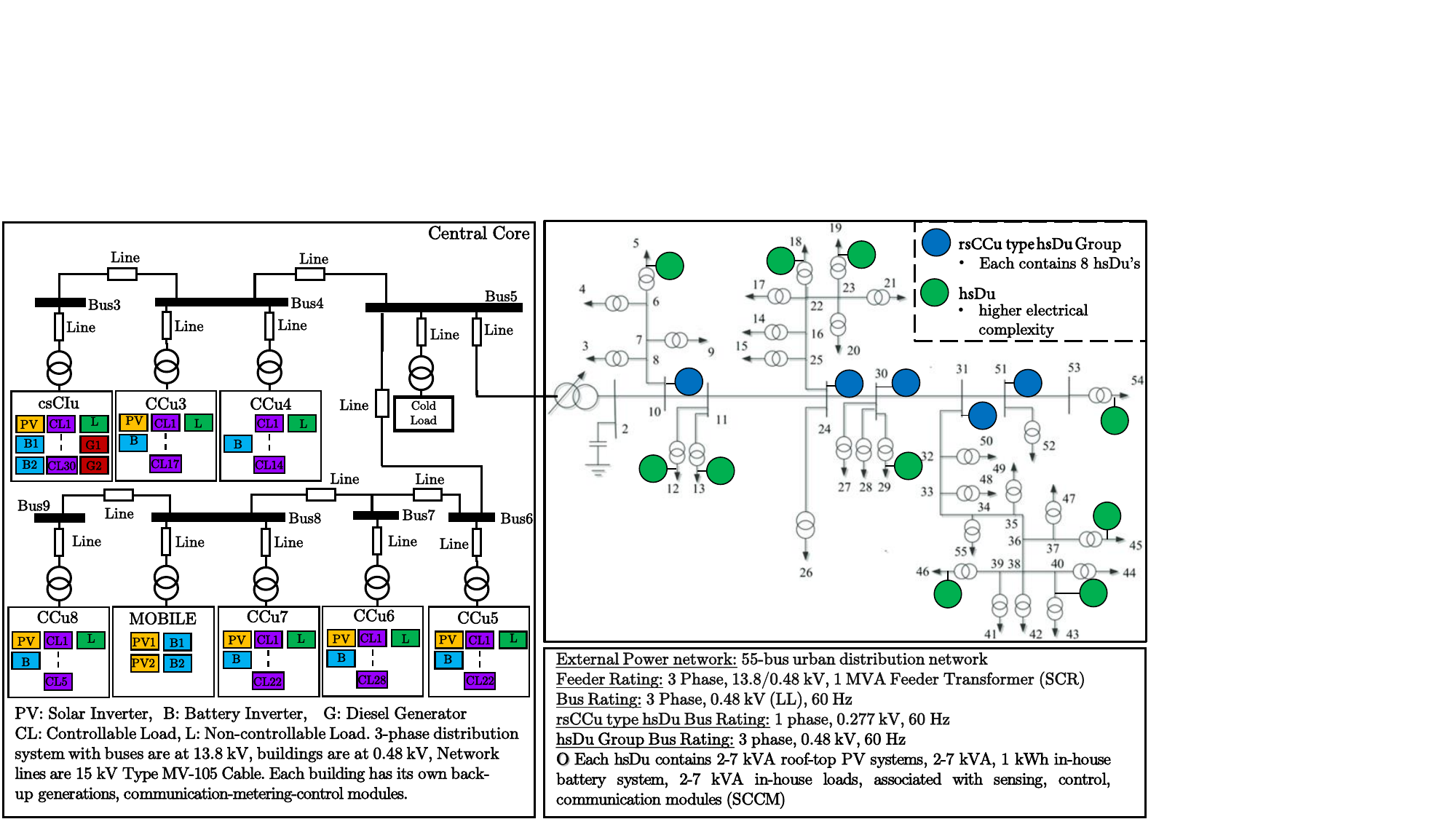}%
	\label{fig:microgrid}}\hspace{-0.05in}
	\subfloat[]{\includegraphics[scale=0.37,trim={0cm 0cm 23cm 0cm},clip]{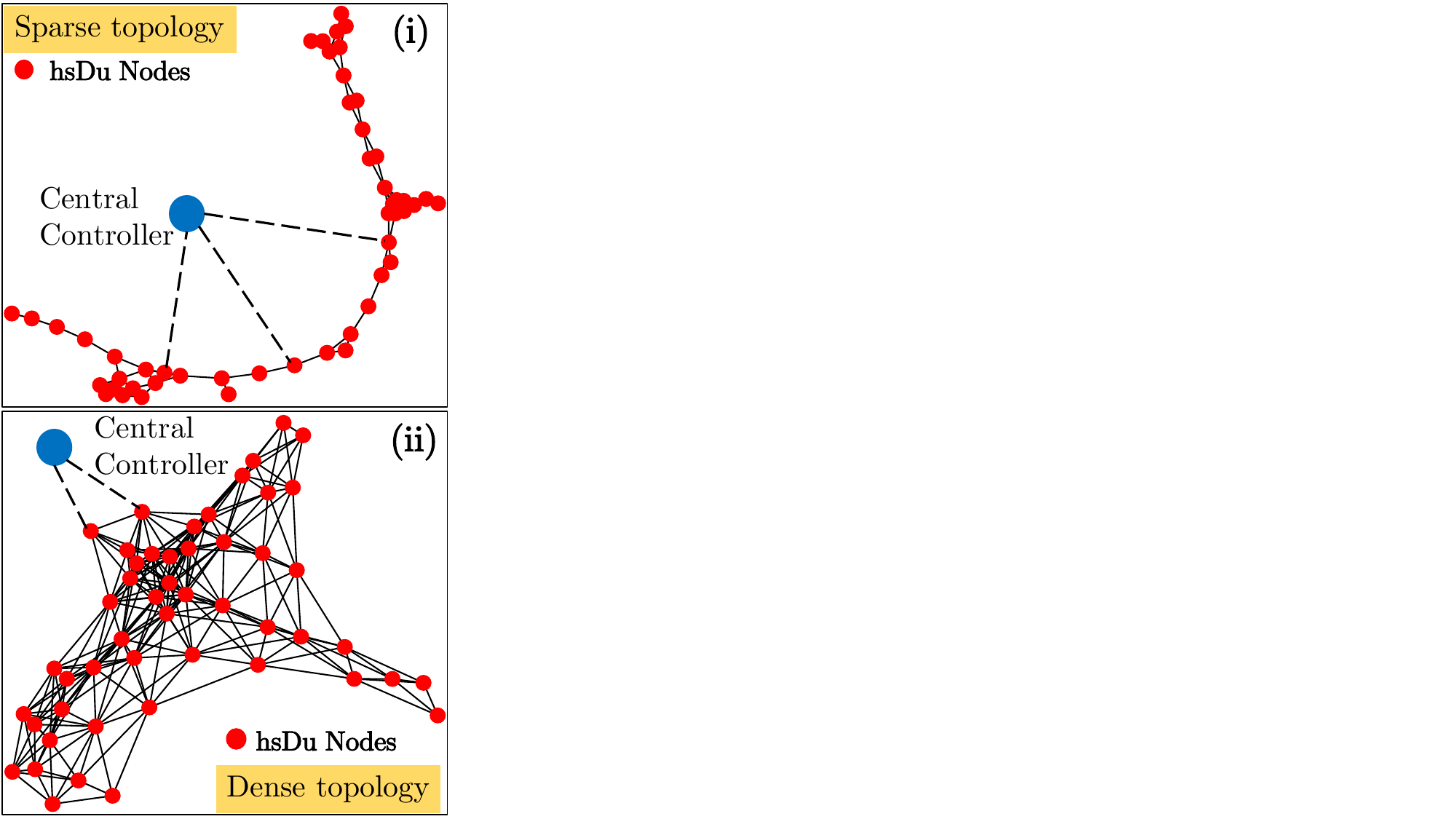}%
	\label{fig:communication}}
	\caption{Figure of (a) power network of the microgrid under study, (b) communication topology used in the power apportioning.}
	\label{fig:microgrid_communication}
\end{figure*}

\begin{table*}[b]
\centering
\caption{RATINGS OF VARIOUS SOURCES AND LOADS IN CENTRAL CORE OF THE MICROGRID}
\label{tab:rating1}
\resizebox{\linewidth}{!}{
\scriptsize
\begin{tabular}{c|ccccccccc}
\hline
\multirow{2}{*}{$\mathbf{Types}$} & \multicolumn{9}{c}{$\mathbf{Central~Core~of~the~Microgrid}$} \\ \cline{2-10} 
 & \multicolumn{1}{c|}{$\mathbf{csCIu}$} & \multicolumn{1}{c|}{$\mathbf{CCu3}$} & \multicolumn{1}{c|}{$\mathbf{CCu4}$} & \multicolumn{1}{c|}{$\mathbf{CCu5}$} & \multicolumn{1}{c|}{$\mathbf{CCu6}$} & \multicolumn{1}{c|}{$\mathbf{CCu7}$} & \multicolumn{1}{c|}{$\mathbf{CCu8}$} & \multicolumn{1}{c|}{$\mathbf{Mobile~Unit}$} & $\mathbf{Cold~Load}$ \\ \hline
\begin{tabular}[c]{@{}c@{}}$\mathrm{Battery~GFM~Inverter~(kVA)}$\end{tabular} & \multicolumn{1}{c|}{$200$} & \multicolumn{1}{c|}{$25$} & \multicolumn{1}{c|}{$100$} & \multicolumn{1}{c|}{$80$} & \multicolumn{1}{c|}{$200$} & \multicolumn{1}{c|}{$100$} & \multicolumn{1}{c|}{$50$} & \multicolumn{1}{c|}{$100$} & - \\ \hline
\begin{tabular}[c]{@{}c@{}}$\mathrm{Diesel~Gen-Set~(kVA)}$\end{tabular} & \multicolumn{1}{c|}{$100$} & \multicolumn{1}{c|}{-} & \multicolumn{1}{c|}{-} & \multicolumn{1}{c|}{-} & \multicolumn{1}{c|}{-} & \multicolumn{1}{c|}{-} & \multicolumn{1}{c|}{-} & \multicolumn{1}{c|}{-} & - \\ \hline
\begin{tabular}[c]{@{}c@{}}$\mathrm{PV~GFL~Inverter~(kVA)}$\end{tabular} & \multicolumn{1}{c|}{$50$} & \multicolumn{1}{c|}{$50$} & \multicolumn{1}{c|}{-} & \multicolumn{1}{c|}{$100$} & \multicolumn{1}{c|}{$100$} & \multicolumn{1}{c|}{$50$} & \multicolumn{1}{c|}{$100$} & \multicolumn{1}{c|}{$50$} & - \\ \hline
\begin{tabular}[c]{@{}c@{}}$\mathrm{Uncontrollable~Load~(kVA)}$\end{tabular} & \multicolumn{1}{c|}{$250$} & \multicolumn{1}{c|}{$30$} & \multicolumn{1}{c|}{$35$} & \multicolumn{1}{c|}{$30$} & \multicolumn{1}{c|}{$40$} & \multicolumn{1}{c|}{$25$} & \multicolumn{1}{c|}{$30$} & \multicolumn{1}{c|}{-} & $200$ \\ \hline
\begin{tabular}[c]{@{}c@{}}$\mathrm{Controllable~Load~(kVA)}$\end{tabular} & \multicolumn{1}{c|}{$500$} & \multicolumn{1}{c|}{$300$} & \multicolumn{1}{c|}{$315$} & \multicolumn{1}{c|}{$300$} & \multicolumn{1}{c|}{$360$} & \multicolumn{1}{c|}{$225$} & \multicolumn{1}{c|}{$270$} & \multicolumn{1}{c|}{-} & - \\ \hline
\end{tabular}}
\end{table*}

\begin{table*}[t]
\centering
\caption{RATINGS OF VARIOUS SOURCES AND LOADS IN EXTERNAL NETWORK OF THE MICROGRID}
\label{tab:rating2}
\resizebox{\linewidth}{!}{%
\scriptsize

\begin{tabular}{c|cccccccccc|ccccc}
\hline
\multirow{2}{*}{\begin{tabular}[c]{@{}c@{}}$\mathbf{Types}$\\ $\mathbf{(kVA)}$\end{tabular}} & \multicolumn{10}{c|}{$\mathbf{hsDu~in~External~Microgrid}$} & \multicolumn{5}{c}{$\mathbf{hsDu~Groups}$} \\ \cline{2-16} 
 & \multicolumn{1}{c|}{\begin{tabular}[c]{@{}c@{}}$\mathrm{No.~1}$ \end{tabular}} & \multicolumn{1}{c|}{\begin{tabular}[c]{@{}c@{}}$\mathrm{No.~2}$\end{tabular}} & \multicolumn{1}{c|}{\begin{tabular}[c]{@{}c@{}}$\mathrm{No.~3}$ \end{tabular}} & \multicolumn{1}{c|}{\begin{tabular}[c]{@{}c@{}}$\mathrm{No.~4}$ \end{tabular}} & \multicolumn{1}{c|}{\begin{tabular}[c]{@{}c@{}}$\mathrm{No.~5}$\end{tabular}} & \multicolumn{1}{c|}{\begin{tabular}[c]{@{}c@{}}$\mathrm{No.~6}$\end{tabular}} & \multicolumn{1}{c|}{\begin{tabular}[c]{@{}c@{}}$\mathrm{No.7}$\end{tabular}} & \multicolumn{1}{c|}{\begin{tabular}[c]{@{}c@{}}$\mathrm{No.~8}$\end{tabular}} & \multicolumn{1}{c|}{\begin{tabular}[c]{@{}c@{}}$\mathrm{No.9}$\end{tabular}} & \begin{tabular}[c]{@{}c@{}}$\mathrm{No.10}$\end{tabular} & \multicolumn{1}{c|}{\begin{tabular}[c]{@{}c@{}}$\mathrm{No.~1}$\end{tabular}} & \multicolumn{1}{c|}{\begin{tabular}[c]{@{}c@{}}$\mathrm{No.~2}$\end{tabular}} & \multicolumn{1}{c|}{\begin{tabular}[c]{@{}c@{}}$\mathrm{No.~3}$\end{tabular}} & \multicolumn{1}{c|}{\begin{tabular}[c]{@{}c@{}}$\mathrm{No.~4}$\end{tabular}} & \begin{tabular}[c]{@{}c@{}}$\mathrm{No.~5}$\end{tabular} \\ \hline
$\mathrm{PV}$ & \multicolumn{1}{c|}{$3$} & \multicolumn{1}{c|}{$5$} & \multicolumn{1}{c|}{-} & \multicolumn{1}{c|}{$5$} & \multicolumn{1}{c|}{$5$} & \multicolumn{1}{c|}{$5$} & \multicolumn{1}{c|}{$2$} & \multicolumn{1}{c|}{$5$} & \multicolumn{1}{c|}{$2$} & $2$ & \multicolumn{1}{c|}{$40$} & \multicolumn{1}{c|}{$24$} & \multicolumn{1}{c|}{$64$} & \multicolumn{1}{c|}{$32$} & $40$ \\ \hline
$\mathrm{Battery}$ & \multicolumn{1}{c|}{$3$} & \multicolumn{1}{c|}{$2$} & \multicolumn{1}{c|}{$8$} & \multicolumn{1}{c|}{-} & \multicolumn{1}{c|}{$5$} & \multicolumn{1}{c|}{-} & \multicolumn{1}{c|}{$7$} & \multicolumn{1}{c|}{$7$} & \multicolumn{1}{c|}{$2$} & $5$ & \multicolumn{1}{c|}{$16$} & \multicolumn{1}{c|}{$32$} & \multicolumn{1}{c|}{$32$} & \multicolumn{1}{c|}{$24$} & $24$ \\ \hline
$\mathrm{Load}$ & \multicolumn{1}{c|}{$3$} & \multicolumn{1}{c|}{$2$} & \multicolumn{1}{c|}{$2$} & \multicolumn{1}{c|}{$4$} & \multicolumn{1}{c|}{$6$} & \multicolumn{1}{c|}{1} & \multicolumn{1}{c|}{$3$} & \multicolumn{1}{c|}{$7$} & \multicolumn{1}{c|}{$3$} & $3$ & \multicolumn{1}{c|}{$24$} & \multicolumn{1}{c|}{$20$} & \multicolumn{1}{c|}{$24$} & \multicolumn{1}{c|}{$20$} & $28$ \\ \hline
\end{tabular}}
\end{table*}
The electrical network is developed based on a power network at the University of Minnesota campus, called the ``Central Core'', and the suburb Minneapolis distribution network. The schematic diagram of the electrical network is shown in Fig.~\ref{fig:microgrid_communication}\subref{fig:microgrid}, which consists of commercial-scale critical infrastructure units (csCIu) on Bus $3$, central core units (CCus) Bus $4$-$9$, mobile auxiliary generation units on Bus $8$, and a cold load station located at Bus $5$. Battery grid-forming (GFM) inverters \cite{DER_control_Madureira} (denoted as B,B$_1$,B$_2$ in Fig.~\ref{fig:microgrid_communication}\subref{fig:microgrid}) are designed based on $\mathrm{P}$-$\mathrm{f}$ and $\mathrm{Q}$-$\mathrm{V}$ droop law. Photovoltaic (PV) grid-following (GFL) inverter system \cite{DER_control_Madureira} is designed based on outer $\mathrm{P}$-$\mathrm{Q}$ controller. Diesel generation sets (denoted as G$_1$,G$_2$ in Fig.~\ref{fig:microgrid_communication}\subref{fig:microgrid}) are designed based on load-frequency-based governor control and $\mathrm{Q}$-$\mathrm{V}$-based automatic voltage regulator. The ratings of the central core are summarized in Table~\ref{tab:rating1}. The central core demands, $\rho_d(t_s)$, and since it contains the critical infrastructure, the external power network needs to support the central core and generate active power output that is equal to $\rho_d(t_s)$. The external power network is connected to the central core at the feeder near Bus-$5$. The power network is a $3$-$\phi$, $0.48$ kV, $60$ Hz, $55$-bus distribution network with $10$ residential apartments, called highly-scaled distributed units (hsDu), and $5$ residential-scale central core units (rsCCu), each of which contains $8$ hsDu. The external network has $50$ hsDus with controllable DERs. \textit{In the rest of the article, we will use the terms hsDu and DER interchangeably}. The ratings of the external power network are summarized in Table~\ref{tab:rating2}. To meet the requested power demand of the central core the external power network with $50$ DERs is employed with a power apportioning engine, running Algorithm~1 of \cite{apportioning1}. The central core, at a given time $t_s$, commands a total power demand set point based on its current situation. The DERs' decisions are aggregated via a communication network. Figs.~\ref{fig:microgrid_communication}(b)(i) and~(b)(ii) show two representative communication topologies, one with sparse connectivity and the other with dense connectivity respectively. Each node in the communication network is a hsDu with a communication interface, together referred to as an agent responsible for generating the active power set point for the corresponding DER based on the results obtained from the power apportioning engine. The power apportioning engine is described in detail next.

\subsection{Distributed Power Apportioning}\label{sec:dist_Appr}
The central controller is an entity interfacing
with the central core on one end and the hsDu network $\mathcal{G}=\{\mathcal{V}, \mathcal{E}\}$, with $| \mathcal{V}| = 50$, on the other. The central controller, similar to a power aggregator in the Transmission System, aggregates the DERs available in the external network to provide ancillary power to the central core. Considering the geographical span of the external power network it may not be feasible for all DERs to directly communicate with the central controller. Hence, only some DERs given by the set $\mathcal{N}_d \subseteq \mathcal{V}$, with $l=|\mathcal{N}_d|$ can directly communicate with the central controller. The central controller aggregates the central core's power demand $\rho_d(t_s)$, and relays it to DER $i, i\in\mathcal{N}_d$ at time instant $t_s \geq 0$. The DERs need to collectively meet the demand $\rho_d(t_s)$ while communicating only to their neighboring DERs in a connected communication graph $\mathcal{G}=\{\mathcal{V}, \mathcal{E}\}$ and respecting individual resource constraints. In particular, the DERs with power reference commands $\pi_i^*(t_s)$, ensure $\sum_{i\in \mathcal{V}}\pi_i^*(t_s) = \rho_d(t_s)$, and the capacity constraints of the DERs are not violated, that is $\pi_i^\mathrm{min}(t_s) \leq \pi_i^*(t_s) \leq \pi_i^\mathrm{max}(t_s)$, where $\pi_i^\mathrm{min}(t_s)$ and $\pi_i^\mathrm{max}(t_s)$ denote the maximum and minimum generation capacities of the DER $i$ at time instant $t_s$. State of the art on solving distributed apportioning involves an iterative finite-time average consensus protocol \cite{apportioning1,apportioning2}:\\
At each iteration $k$ of the protocol, every DER sends three internal states, $x_i(k), y_i(k), z_i(k)$ to its out-neighbors $\mathcal{N}_j^+:=\{i|(i,j)\in \mathcal{E}, i\neq j\}$, and receives these estimates from its in-neighbors $\mathcal{N}_j^-:=\{j|(j,i)\in\mathcal{E}, i\neq j\}$. The states are then updated as follows: 
\begin{align}
    x_i(k+1) &= \textstyle p_{ii}x_i(k)+\sum_{j\in \mathcal{N}_i^-}p_{ij}x_j(k), \label{eq:appr_update_num}\\
    y_i(k+1) &= \textstyle p_{ii}y_i(k)+\sum_{j\in \mathcal{N}_i^-}p_{ij}y_j(k), 
    \label{eq:appr_update_den} \\
    z_i(k+1) & = \textstyle \frac{1}{y_i(k+1)} x_i(k+1) \label{eq:appr_update_ratio}
\end{align}
where $p_{ij} > 0, i,j \in \mathcal{V}$ denotes predefined weights and $x_i(0) = \left(\rho_d(t_s)/l \right) - \pi_i^\mathrm{min}(t_s)$, if $i \in \mathcal{N}_d$,  $x_i(0) = - \pi_i^\mathrm{min}(t_s)$ if $i \in \mathcal{V} \setminus \mathcal{N}_d$, and $y_i(0) = \pi_i^\mathrm{max}(t_s) - \pi_i^\mathrm{min}(t_s)$. The updates~\eqref{eq:appr_update_num}-\eqref{eq:appr_update_ratio} are accompanied by additional maximum and minimum consensus protocols for terminating the iterations and producing a faster decision in finite time \cite{apportioning1,apportioning2}. In particular, when the difference between the global maximum and minimum goes below a specified tolerance $\varepsilon$ then algorithm iterations are terminated with the state value 
\begin{align}\label{eq:z_star}
    z_i^* = \textstyle \frac{\rho_d(t_s) - \sum_{i\in \mathcal{V}} \pi_i^\mathrm{min}(t_s)  }{\sum_{i\in \mathcal{V}} (\pi_i^\mathrm{max}(t_s) - \pi_i^\mathrm{min}(t_s)) }.
\end{align}
The DERs then dispatch the following power reference set-points, 
\begin{align*}
    \pi_i^*(t_s):=\pi_i^\mathrm{min}(t_s) & + z_i^*(\pi_i^\mathrm{max}(t_s)-\pi_i^\mathrm{min}(t_s)). \\
    \mbox{Note that} \ \textstyle \sum_{i\in \mathcal{V}} \pi_i^*(t_s) &= \rho_d(t_s).
\end{align*}
Articles \cite{apportioning1,apportioning2} provide more detail about the apportioning protocol.

\section{Malicious Behavior of DERs}\label{sec:malicious_agent_desc}
We considered the distributed apportioning network in the previous section and its ability to provide ancillary services to the central core with aggregation of hsDus. For the updates~\eqref{eq:appr_update_num}-\eqref{eq:appr_update_den} of any DER utilize the information shared by its neighbors under the assumption of trustworthy cooperation towards the objective of meeting the ancillary service goal for the central core. This leaves the DER network vulnerable to malicious agents trying to manipulate the power-apportioning protocol towards selfish goals or something even more severe by disrupting the convergence of the apportioning protocol to make the central core system unstable. We model the malicious behavior via an intruder model. Let $\mathcal{N}_m$ denote the set of maliciously behaving DERs. Let $k_m$ denote the iteration of the start of the malicious behavior for any malicious DER $m$. Any DER $m \in \mathcal{N}_m$ applies a linear drift $\theta_m(k)$ to its true state $x_m(k)$ and sends 
\begin{align}\label{eq:intruder_model}
    \hat{x}_m(k) := x_m(k) + \theta_m(k),
\end{align}
for all $k \geq k_m$, to its neighbors. The update of the neighboring DER $i$ of the malicious DER $m$ in the power apportioning protocol becomes,
\begin{align}
    x_i(k+1) &= \textstyle p_{ii}x_i(k)+\sum_{j\in \mathcal{N}_i^- \setminus m}p_{ij}x_j(k) \nonumber \\
    & \hspace{1in} + p_{im} (x_m(k) + \theta_m(k)). \label{eq:appr_update_num_mal}
\end{align}
Different choices of the perturbation $\theta_m$ allow the malicious DER $m$ to achieve different objectives, in particular
\begin{itemize}\label{list:attack_cases}
    \item [a)] if $\theta_m (k)$ is a divergent (bounded or unbounded) sequence then the updates~\eqref{eq:appr_update_num}-\eqref{eq:appr_update_den} for the DERs do not converge
    \item [b)] if $\theta_m (k)$ is convergent series with  $\sum_{k} \theta_m (k)  = \alpha$. This leads to the DER solution $z_i^*$ to converge to $\frac{\rho_d(t_s) - \sum_{i\in \mathcal{V}} \pi_i^\mathrm{min}(t_s) + \alpha  }{\sum_{i\in \mathcal{V}} (\pi_i^\mathrm{max}(t_s) - \pi_i^\mathrm{min}(t_s)) } $ (this can be shown using \cite{apportioning2}, Theorem~{III.1}), leading to $\sum_{i\in \mathcal{V}} \pi_i^*(t_s) = \rho_d(t_s) + \alpha$, which can be made arbitrarily smaller/larger than $\rho_d(t_s)$.
\end{itemize}
\begin{remark}
    Although we consider drift in the states $x_i$ by the malicious agents, a similar deviation in the $y_i$ updates can be considered. However, the malicious DER can cause any desired result by only manipulating $x_i$ states and doesn't have additional benefits by manipulation of the $y_i$ states. Further, the intruder model with states $\hat{x}_m$ and the update~\eqref{eq:appr_update_num_mal} allows for modeling DERs that are not necessarily malicious but are misbehaving (and send imperfect values to neighboring DERs) due to errors in computation or communication or any other internal failures.
\end{remark}

\begin{figure*}[b]
	\centering
	\subfloat[]{\includegraphics[scale=0.28,trim={0.39cm 0.5cm 5cm 0.48cm},clip]{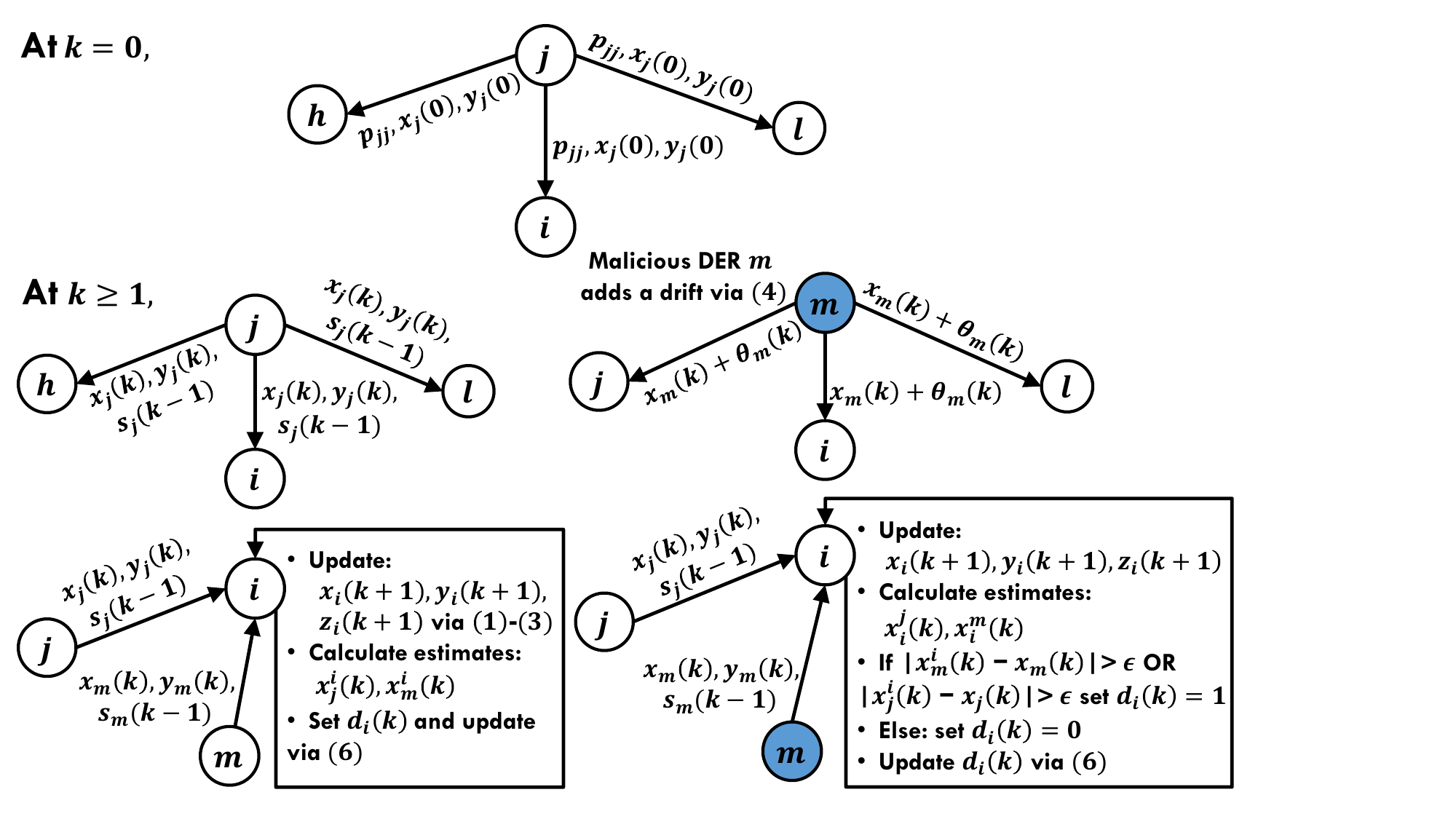}%
	\label{fig:detection_scheme}}
	\subfloat[]{\includegraphics[scale=0.32,trim={3.3cm 0cm 0cm 0cm},clip]{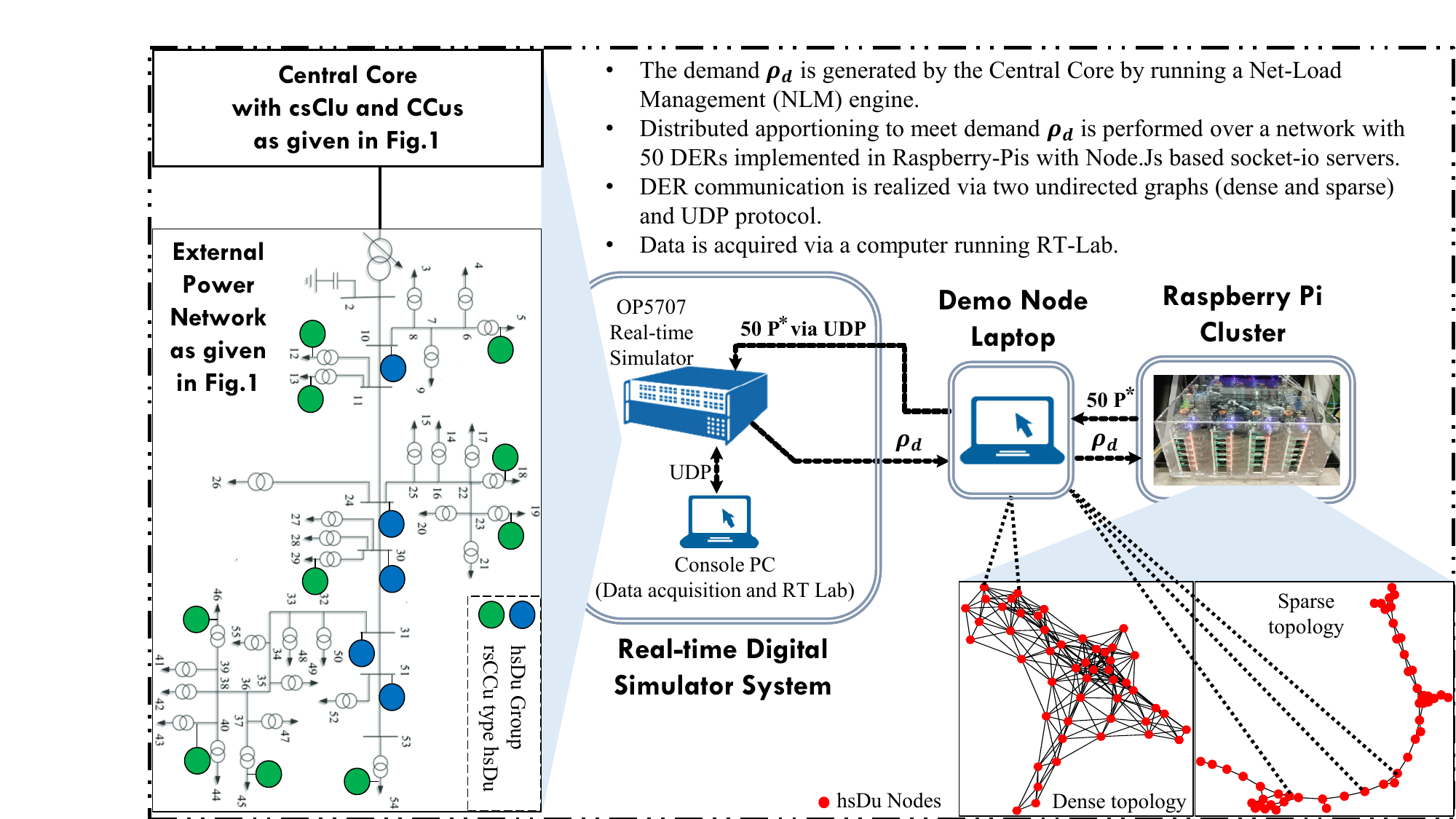}%
	\label{fig:chil_setup}}
	\caption{(a) Resilient distributed power apportioning scheme with malicious DER $m$. Honest DER $i$ detects the presence of $m$, (b) Laboratory CHIL setup.}
	\label{fig:detection_and_chil}
\end{figure*}
\subsection{Resilient Distributed Power Apportioning}
In light of the malicious DERs, we strive to equip each DER with the ability to detect deviant behavior and isolate the communication links of the malicious DERs. We make the following assumption:
\begin{assump}\label{assmp:trust_beginning}
    At the beginning of the power apportioning, i.e. $k = 0$, the information sent by the DERs is truthful.
\end{assump}
\noindent Next, we describe this strategy: Given $\mathcal{G}=\{\mathcal{V}, \mathcal{E}\}$, each DER maintains additional states $x_j^i(k)$ for all its neighbors $j \in \mathcal{N}_i^-$ and a \df $d_i$ utilized by DER $i$ to indicate the presence of a malicious DER in its neighborhood. The distributed apportioning protocol is modified to allow each DER $i$ to utilize these additional states $x_j^i$. The augmented distributed apportioning protocol has the following key steps:\\
$\bullet$ At the start of the apportioning protocol, i.e. $k = 0$,\\
\hspace*{0.05in} $\Diamond$ each DER $i$, in addition to $x_j(0), y_j(0)$, requires all its in-neighbors  $j \in \mathcal{N}_i^-$ to send their weights $p_{jj}$. \\   
$\bullet$ At any subsequent iteration $k \geq 1$,\\
\hspace*{0.05in} $\Diamond$ every DER $j$ sends $x_j(k), y_j(k)$ and an additional state $s_j(k-1) = \sum_{l \in \mathcal{N}_j^-} p_{jl} x_l(k-1)$ to its out-neighbors $i \in \mathcal{N}_j^+$. \\
\hspace*{0.05in} $\Diamond$ every DER $i$ then updates its states to $x_i(k+1), y_i(k+1), z_i(k+1)$ via the updates~\eqref{eq:appr_update_num}-\eqref{eq:appr_update_ratio}, and the states $x_j^i$ as $x_j^i(k) = p_{jj} x_j(k-1) + s_j(k-1)$.\\
\hspace*{0.05in} $\Diamond$ if there is a mismatch between the estimated state $x_j^i(k)$ and the actual received state $x_j(k)$ then the agent $i$ detects that the communication links between its neighbor $j$ and itself are compromised. In particular, with small $\varepsilon > 0$, if $| x_j^i(k) - x_j(k)| > \varepsilon$ then agent $i$ detects node $j$ is compromised. Further, DER $i$ sets its $\mathrm{detection\_flag}$ $d_i$ to $1$ and adds the communication address of DER $j$ to the list of compromised communication links, $\mathcal{C}^m_i$, in its neighborhood. \\
\hspace*{0.05in} $\Diamond$ Any DER $i$ that detects a malicious DER in its neighborhood, propagates its \df in the network. This is achieved via a $1$-bit consensus algorithm. Each DER initializes $d_i(k)$ at every $k = l\mathcal{D}$, where $\mathcal{D}$ is an upper bound on the diameter of the graph $\mathcal{G}(\mathcal{V}, \mathcal{E})$, iteration for $l \in \{0,1,2,\dots\}$ with $1$ or $0$ depending on whether the DER has detected a malicious DER in its neighborhood or not, and updates its \df value every iteration using,
    \begin{align}\label{eq:bitCons}
        d_i(k+1) = \textstyle \bigcup_{j\in \mathcal{N}_j^- \cup \{i\}} d_j(k),
    \end{align}
where $\bigcup$ denotes the ``OR" operation and $d_j(0)=1$ if node $j$ has detected a malicious DER at initialization instant $0$ and $d_j(0)=0$ otherwise. Clearly, if $d_j(0)=1$ for any DER $j \in \mathcal{V}$, then $ d_i(\mathcal{D})=1$ for all $i \in \mathcal{V}$ where $\mathcal{D}$ is the diameter. Thus, each DER can use $d_i(\mathcal{D})$ as a criterion to ascertain the presence of a malicious DER in the network.
The modified power apportioning algorithm is presented in Fig.~\ref{fig:detection_and_chil}\subref{fig:detection_scheme}. We present the following result for the detection scheme. 
\begin{theorem}
    Let Assumption~\ref{assmp:trust_beginning} hold. Then the detection scheme detailed above and in Fig.~\ref{fig:detection_and_chil}\subref{fig:detection_scheme} leads to detecting any malicious DER obscuring its true states via~\eqref{eq:intruder_model}.
\end{theorem}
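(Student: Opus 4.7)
The plan is to prove two complementary facts: (i) whenever DER $j$ follows the update rule honestly, the predictor $x_j^i(k)$ matches the transmitted $x_j(k)$ exactly, and (ii) as soon as a malicious DER $m$ injects a drift $\theta_m$ through~\eqref{eq:intruder_model}, this equality breaks and the check $|x_j^i(k)-x_j(k)|>\varepsilon$ trips. Rewriting~\eqref{eq:appr_update_num} as $x_j(k)=p_{jj}x_j(k-1)+s_j(k-1)$ makes the match transparent: the formula DER $i$ uses to form $x_j^i(k)$ is precisely the honest one-step update. Assumption~\ref{assmp:trust_beginning} delivers truthful $p_{jj}$, $x_j(0)$, $y_j(0)$ at $k=0$, so a simple induction over $k$ (while $j$ remains honest) gives $x_j^i(k)=x_j(k)$ for every in-neighbor $j$ of $i$, and no honest DER is ever flagged.

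Next I would focus on the first iteration $k_m$ at which $m$ deviates. Prior messages from $m$ were honest, so the stored value $x_m(k_m-1)$ at $i$ is truthful; under the intruder model~\eqref{eq:intruder_model} the partial-sum message $s_m(k_m-1)$ is untouched (only $x_m$ is drifted), so $x_m^i(k_m)=p_{mm}x_m(k_m-1)+s_m(k_m-1)=x_m(k_m)$. Comparing against the received $\hat{x}_m(k_m)=x_m(k_m)+\theta_m(k_m)$ yields $|x_m^i(k_m)-\hat{x}_m(k_m)|=|\theta_m(k_m)|$, which exceeds $\varepsilon$ for any nontrivial drift, prompting $i$ to set $d_i=1$, append $m$ to $\mathcal{C}^m_i$, and sever the compromised link. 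The $1$-bit consensus~\eqref{eq:bitCons} then disseminates the flag: because the union operator is monotone and $\mathcal{G}$ is connected with diameter at most $\mathcal{D}$, once a single honest DER sets $d_j=1$ we get $d_i(\mathcal{D})=1$ for every $i\in\mathcal{V}$, announcing the intrusion network-wide.

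The most delicate step is covering drifts that slip under $\varepsilon$ at the onset $k_m$ but remain damaging. Rolling the predictor forward gives $x_m^i(k)-\hat{x}_m(k)=p_{mm}\theta_m(k-1)-\theta_m(k)$ for $k>k_m$, so evading the test indefinitely would force $\theta_m(k)\approx p_{mm}^{k-k_m}\theta_m(k_m)$ up to $\varepsilon$, a geometrically decaying perturbation that matches neither the divergent attack (a) nor the convergent-with-nonzero-sum attack (b) listed after~\eqref{eq:appr_update_num_mal}. Invoking \cite{apportioning2}~Theorem~III.1 to link the residual drift budget to the shift in $z_i^*$ shows that any such constrained perturbation is inconsequential, thus completing the argument that every drift capable of meaningfully corrupting the apportioning outcome is flagged and propagated by the scheme.
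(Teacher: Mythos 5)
Your proposal is correct and follows essentially the same route as the paper: the predictor $x_j^i(k)=p_{jj}x_j(k-1)+s_j(k-1)$ reproduces the honest one-step update exactly (truthful initial data by Assumption~\ref{assmp:trust_beginning}, truthful messages before the first deviation), so at the first malicious iteration the mismatch equals $|\theta_m(k_m)|>\varepsilon$ and the flag is raised and propagated by~\eqref{eq:bitCons}, which is precisely the paper's argument (stated there for $k_m=1$ without loss of generality). Your final paragraph on drifts that stay below $\varepsilon$ goes beyond what the theorem (and the paper's proof) claims and is slightly imprecise --- the evasion condition $|p_{mm}\theta_m(k-1)-\theta_m(k)|\le\varepsilon$ does not force geometric decay but allows persistent drifts of order $\varepsilon/(1-p_{mm})$ per iteration, whose cumulative effect is only bounded by the threshold choice and the finite run length --- but this extra discussion is not needed for, and does not affect, the correctness of the main proof.
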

\begin{proof}
We present an argument by considering a particular DER index $i$, the argument holds symmetrically for all DERs $i \in \mathcal{V}$. Following the procedure detailed in Fig.~\ref{fig:detection_and_chil}\subref{fig:detection_scheme} at the starting iteration $k = 0$ node $i$ receives $x_j(0), y_j(0), p_{jj}$ from all its in-neighboring DERs. Note that under Assumption~\ref{assmp:trust_beginning} $x_j(0), y_j(0)$ and $p_{jj}$ are the true values of these quantities. Following this, DER $i$ updates its states following~\eqref{eq:appr_update_num}-\eqref{eq:appr_update_den}, $x_i(1) =  p_{ii}x_i(0)+\sum_{j\in \mathcal{N}_i^-}p_{ij}x_j(0)$, $y_i(1) = p_{ii}y_i(0)+\sum_{j\in \mathcal{N}_i^-}p_{ij}y_j(0)$, $z_i(1) = \frac{x_i(1)}{y_i(1)}$. At $k = 1$, the DER $i$ sends $x_i(1), y_i(1)$ and $s_i(0) = \sum_{l\in \mathcal{N}_i^-}p_{il}x_l(0)$ to its out-neighboring DERs and receives the $x_j(1), y_j(1)$ and $s_j(0) = \sum_{l\in \mathcal{N}_j^-}p_{jl}x_l(0)$ from its in-neighboring DERs $j$. Following this DER $i$ determines the estimates $x_j^i(1) = p_{jj} x_j(0) + s_j(0)$. Without loss of generality assume DER $j$ behaves maliciously at iteration $k=1$, and adds a linear drift $\theta_j(1)$ to the true state $x_j(1)$ and sends $\hat{x}_j(1) = x_j(1) + \theta_j(1)$ to its out-neighbors including DER $i$. Thus at DER $i$, given a samll detection threshold $\varepsilon$, $|\hat{x}_j(1) - x_j^i(1)| = |p_{jj} x_j(0) + s_j(0) - x_j(1) - \theta_j(1)| = |p_{jj} x_j(0) + s_j(0) - p_{jj}x_j(0) - s_j(0) - \theta_j(1)| = |\theta_j(1)| > \varepsilon$ and $j$ gets detected as a maliciously behaving DER by DER $i$. Similarly, all other DERs in the network compute the difference between the states sent by their in-neighbors and the estimates maintained locally to detect any malicious DER in their neighborhood. Thus, any linear drift $\theta_j$ aimed to shift the DER estimates significantly away from the true state values can be detected. This completes the proof.  
\end{proof}
After all the DERs ascertain the presence of malicious DERs via the \df \hspace{-0.1cm}, a new instance of the distributed apportioning algorithm is started among the trustworthy DERs by removing the compromised communication links $C_i^m \in \mathcal{E}$ from their neighborhoods.

%%%%%%%%%%%%%%%%%%%%%%%%%%%%%%%%%%%%%%%%%%%%%%%%%%%%%%%%%%%%%%%%%%%%
%%%%%%%%%%%%%%%%%%%%%%%%%%%%%%%%%%%%%%%%%%%%%%%%%%%%%%%%%%%%%%%%%%%%
%%%%%%%%%%%%%%%%%%%%%%%%%%%%%%%%%%%%%%%%%%%%%%%%%%%%%%%%%%%%%%%%%%%%
%%%%%%%%%%%%%%%%%%%%%%%%%%%%%%%%%%%%%%%%%%%%%%%%%%%%%%%%%%%%%%%%%%%%
\section{CHIL Demonstration and Results}\label{sec:results}
\subsection{Experimental Configuration}

\begin{figure*}[t]
	\centering
	\subfloat{\includegraphics[scale=0.284,trim={0.5cm 0.5cm 1cm 1.9cm},clip]{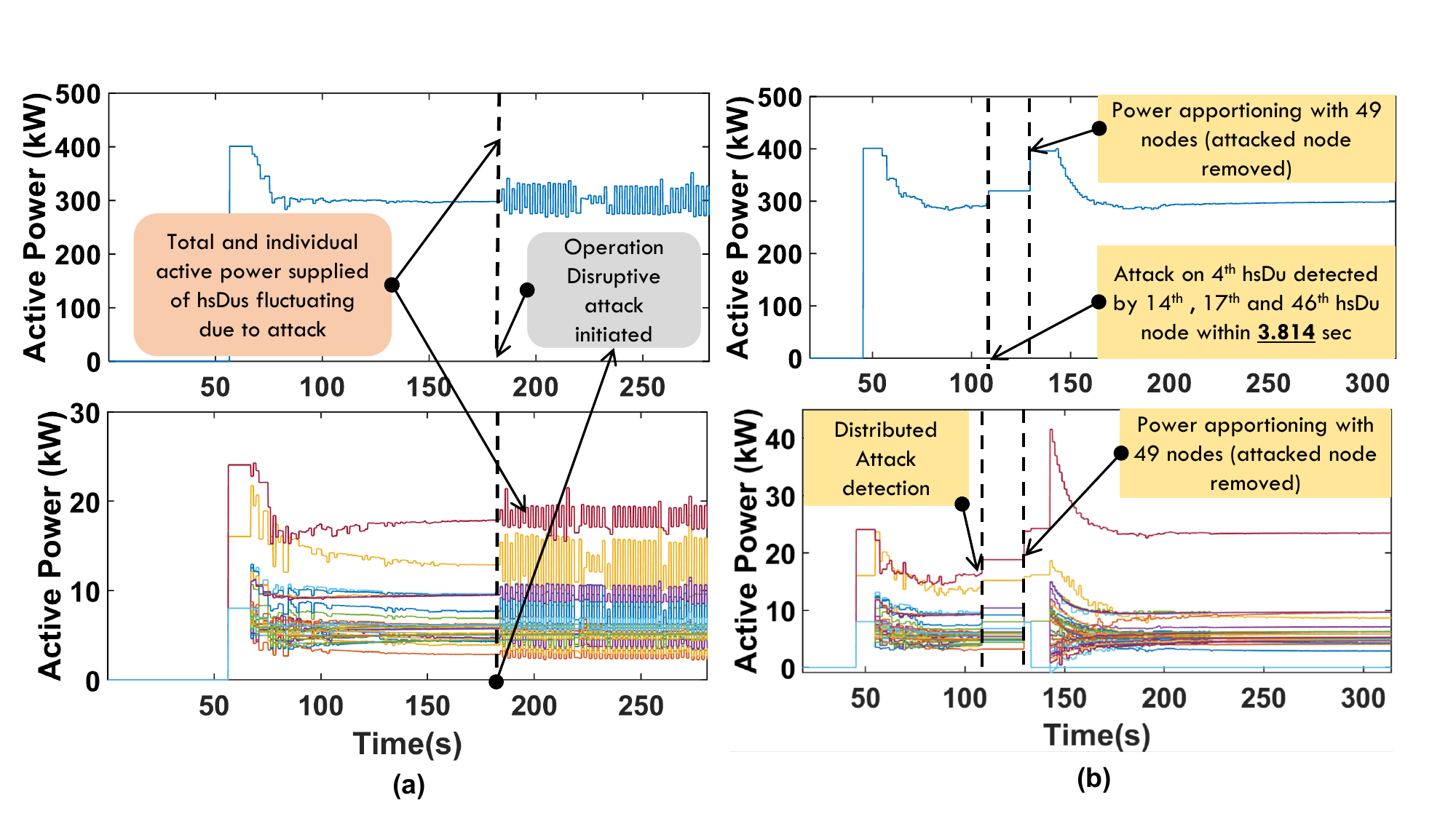}}%
	% \label{fig:op_dis_chain}}
	\subfloat{\includegraphics[scale=0.284,trim={0.5cm 0.5cm 1cm 1.9cm},clip]{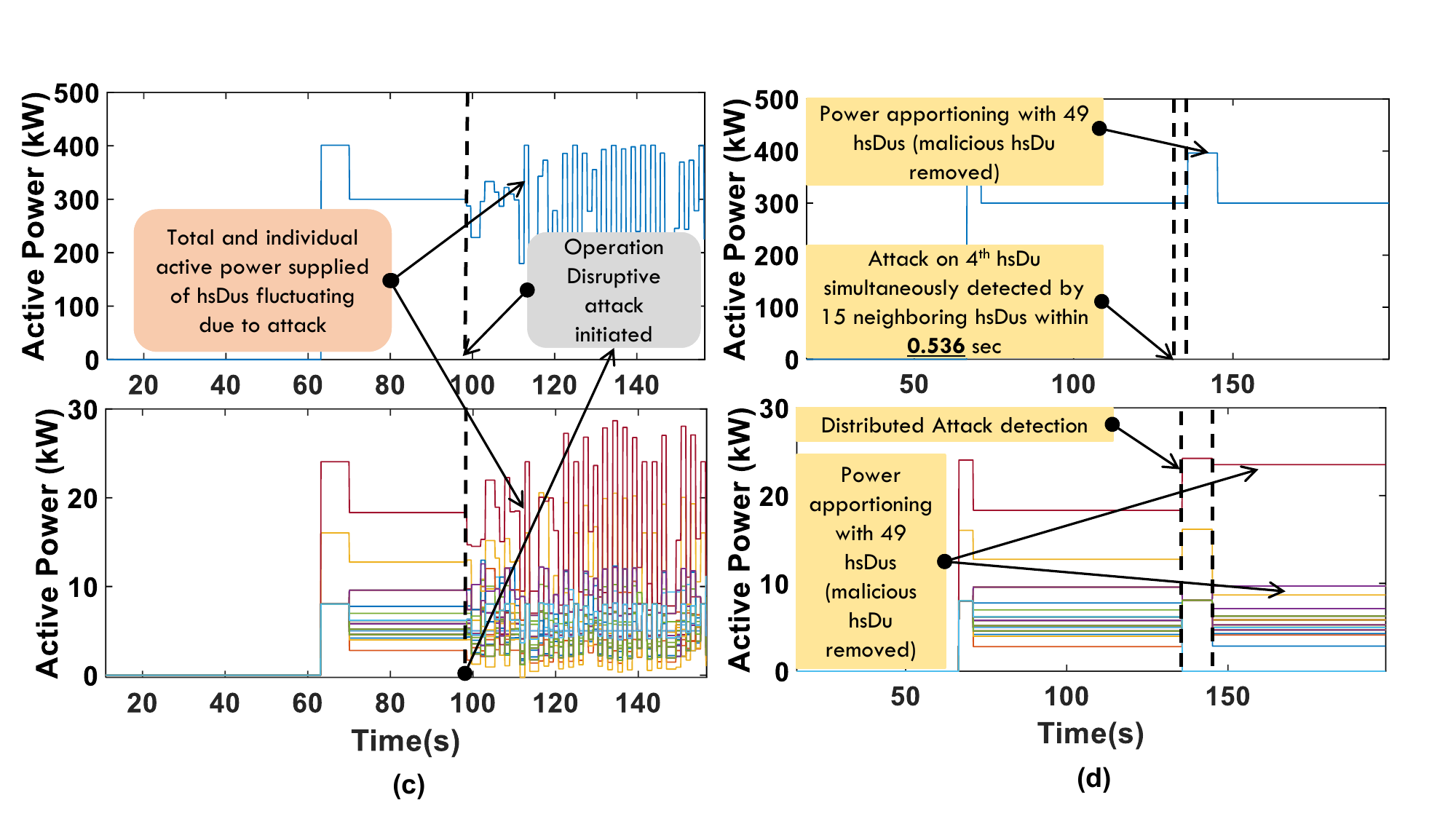}}\\ \vspace{-0.1in}
    % \label{fig:op_dis_star}}\\
    \subfloat{\includegraphics[scale=0.284,trim={0.5cm 0.5cm 1cm 1.9cm},clip]{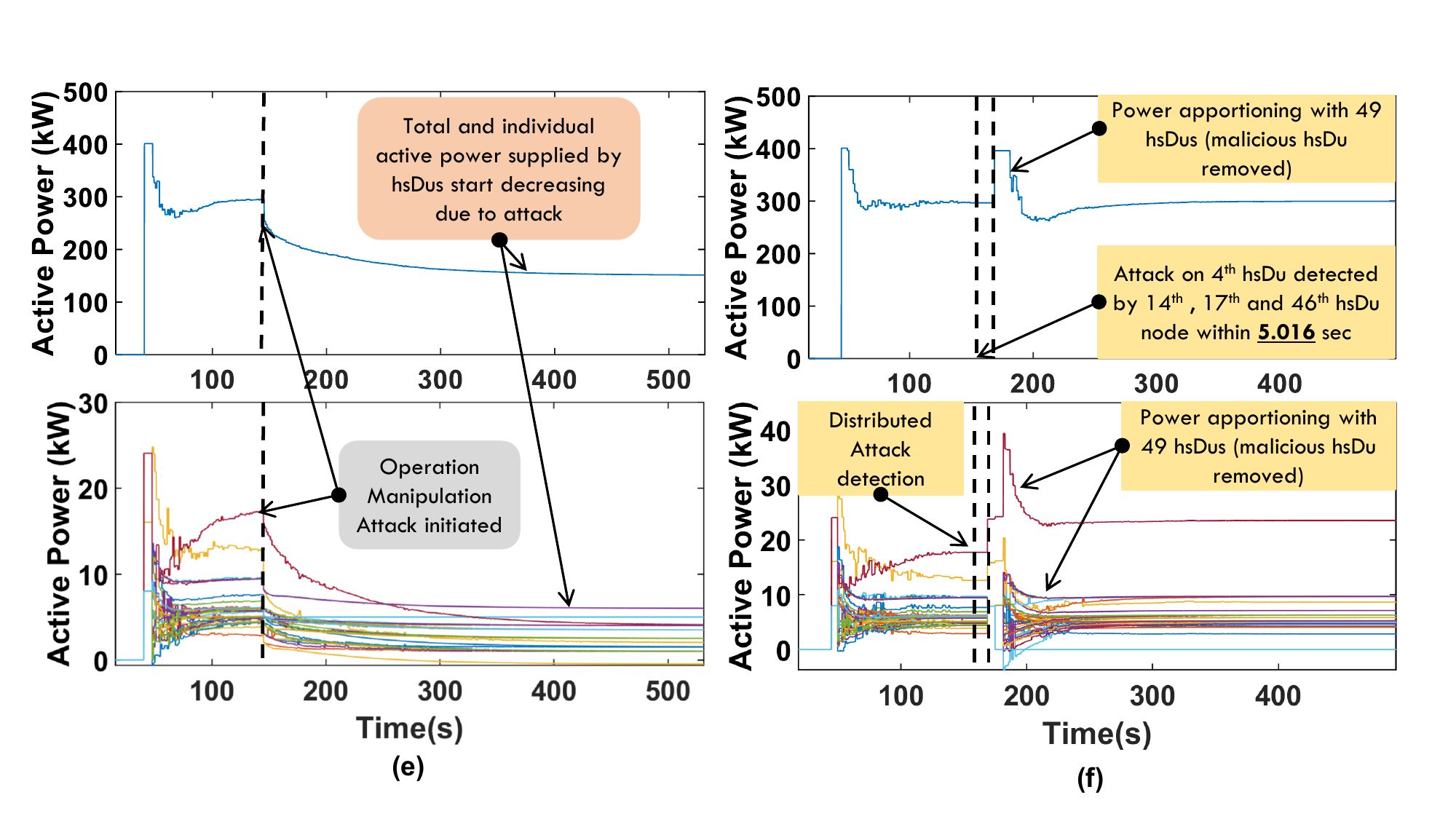}}%
	% \label{fig:op_mani_chain}}
	\subfloat{\includegraphics[scale=0.28,trim={0.5cm 0.5cm 1cm 1.9cm},clip]{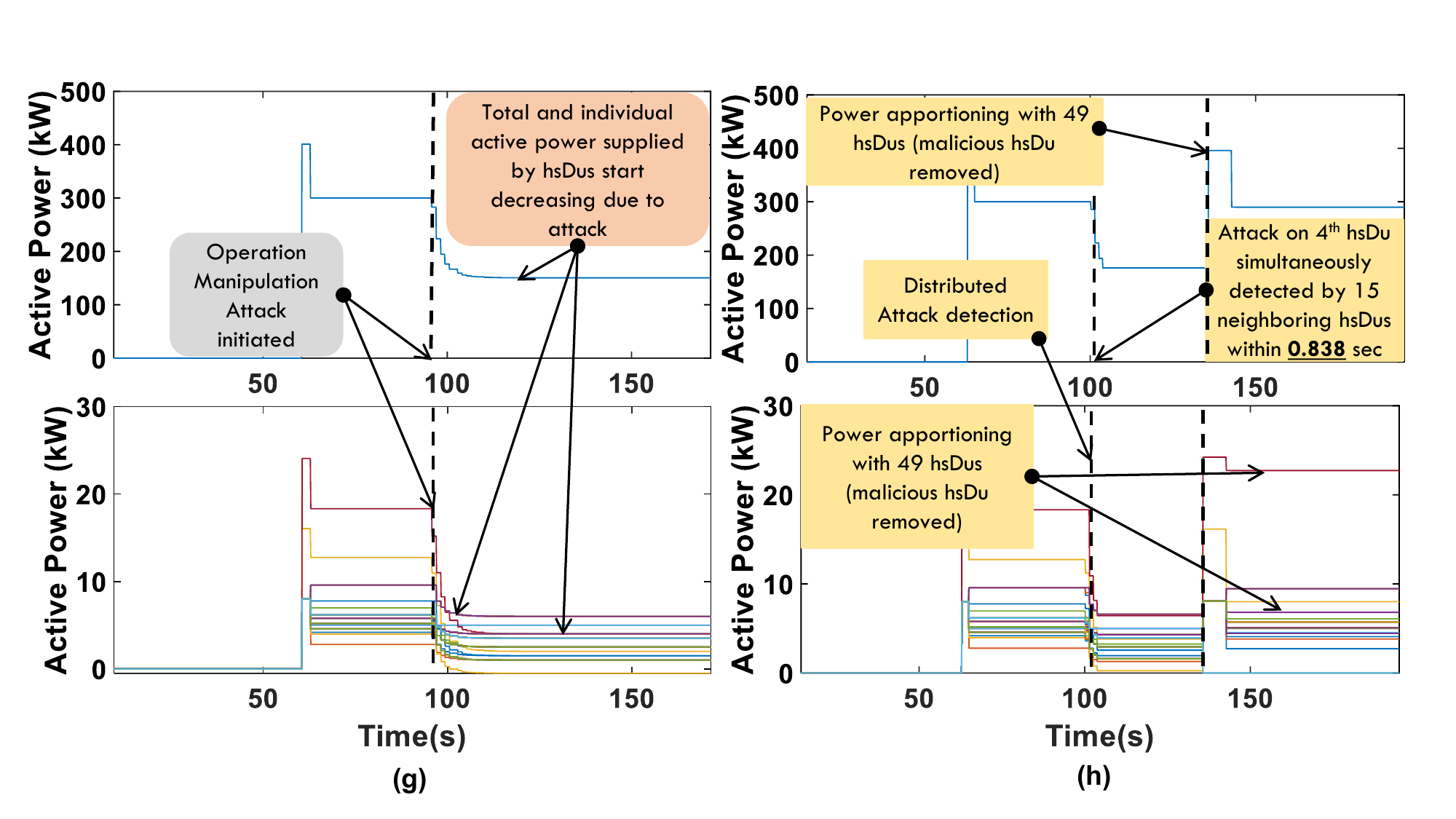}}%
	% \label{fig:op_mani_star}}
	\caption{Performance of the distributed power apportioning scheme with malicious hsDu detection under $\mathtt{ATTACK}$-$\mathtt{1}$ (a)-(d) and-$\mathtt{ATTACK}$-$\mathtt{2}$ (e)-(h).}
	\label{fig:combined_results}
\end{figure*}
% \begin{figure*}[b]
%     \centering
%     \includegraphics[scale=0.35,trim={2.5cm 0cm 0cm 0cm},clip]{images/chil_setup.pdf}
%     \caption{The laboratory-based system-in-the-loop (SIL) and controller hardware-in-the-loop (CHIL) experimental hardware setup.}
%     \label{fig:chil_setup}
% \end{figure*}

% \begin{figure}[t]
% 	\centering
% 	\includegraphics[scale=0.32,trim={0.39cm 0.5cm 4cm 0.48cm},clip]{images/detection_scheme.pdf}
% 	\caption{Resilient distributed power apportioning scheme. Here, DER $m$ is malicious. The honest DER $i$ detects the presence of $m$ in its neighborhood.}
% 	\label{fig:detection_scheme}
% \end{figure}

Fig.~\ref{fig:detection_and_chil}\subref{fig:chil_setup} shows the entire CHIL experimental setup. There are three important components in the experimental setup:\\
\textbf{1.} The microgrid power network under study consisting of the Central Core and the external power network as shown in Fig.~\ref{fig:microgrid} is emulated using the multi-timescale real-time simulation method, with time-step of $T_\mathrm{s}$ = $50\mu$s and $T_\mathrm{ss}$ = $5$ms via eMEGASIM and ePHASORSIM platform respectively inside the OP$5700$ RT-simulator (RTS) manufactured by OPAL-RT.\\
\textbf{2.} The central core is equipped with a central dispatch control system that provides high-speed, cost-optimal coordination of net loads (a combination of generation and controllable loads) \cite{nlm1,nlm2}. We refer to this central dispatch controller as the NLM engine. The NLM runs on a desktop computer with $16$ GB RAM and an Intel Core i$7$ processor running at $1.90$ GHz, utilizing  Python $3.7.1$. The devices in the central core of the emulated microgrid communicate with the NLM engine via using standard User Datagram Protocol (UDP) \cite{udp}.\\
\textbf{3.} The resilient distributed power apportioning algorithm with $50$ nodes connected via pre-defined topology information, embedded in a JavaScript Object Notation (JSON) file, running in a Raspberry-based cluster device. The NLM engine transmits the total power demand command $\rho_d$ to the power apportioning engine via UDP and the power set point for all the $50$ DERs, situated in the external electrical network of the emulated microgrid, are transmitted back to the RT simulator via UDP after the apportioning algorithm has converged.
\par Realistic load profiles (obtained through the metering system by the energy management group of the University of Minnesota), and the real-world solar irradiance profiles (obtained from the National Renewable Energy Laboratory’s Solar Measurement and Instrumentation Data Center and the National Solar Radiation Database \cite{nrel}) are fed into the real-time simulator with $1$ sec resolution.
\par We consider two attack scenarios where hsDu $4$ is malicious and has two different objectives. For both attack scenarios, the total power demand requested by the central core is $300$ kW. Further, we validate the performance of the proposed resilient power apportioning scheme with two extreme cases of communication topology, as shown in Fig.~\ref{fig:microgrid_communication}\subref{fig:communication}(i) and Fig.~\ref{fig:microgrid_communication}\subref{fig:communication}(ii), between the hsDus: one sparse (here hsDu $14,17$ and $46$ are the neighbors of hsDu $4$) and another dense network (here hsDu $4$ has $15$ neighboring hsDus). The convergence of updates~\eqref{eq:appr_update_num} and~\eqref{eq:appr_update_ratio} to the solution~\eqref{eq:z_star} depends on the connectivity of the underlying graph $\mathcal{G}$. Therefore, the sparse and dense topologies with very different connectivities provide a comparison of the performance of the developed resilient distributed apportioning algorithm under practical conditions. The two attack scenarios are as follows:\\
$\bullet~\mathtt{ATTACK}$-$\mathtt{1}$ Operation Disruption Attack: The malicious hsDu $4$ injects false state estimates in the power apportioning scheme such that the algorithm does not converge and no power ancillary support to the central core can be provided. To achieve this hsDu $4$ chooses a divergent perturbation sequence $\theta_4(k) = \delta$ if $k$ is even and $\theta_4(k) = -\delta$ if $k$ is odd.\\
$\bullet~\mathtt{ATTACK}$-$\mathtt{2}$ Operation Manipulation Attack: The malicious hsDu $4$ injects false state estimates to alter the steady-state power dispatch set-points of the honest hsDus to provide only $50\%$, $150$ kW, of the requested ancillary demand.  
\subsection{Results and Discussions}
During the CHIL experiments for both attack scenarios, we perform \textbf{two} runs of the distributed power apportioning engine: (i) without and (ii) with the distributed detection scheme. We report the results without the detection scheme first to show the impact of the attack on the power network and then present the performance of the apportioning algorithm with the detection scheme. Figs.~\ref{fig:combined_results}(a)-(h) provide the results of the CHIL experiments under the attack scenarios $\mathtt{ATTACK}$s-$\mathtt{1}$ and-$\mathtt{2}$. In all the Figs. we present the total power output of the hsDu network in the top half and the power output of individual hsDus in the bottom half.\\
\hspace*{0.1in} Figs.~\ref{fig:combined_results}(a)-(b) and Figs.~\ref{fig:combined_results}(c)-(d) present the results under $\mathtt{ATTACK}$-$\mathtt{1}$ for the sparse and the dense communication topology respectively. It can be seen that the power set-points of the hsDus do not reach a steady state and fluctuate a lot (Figs.~\ref{fig:combined_results}(a),(c)) and potentially can make the system unstable if no countermeasures are taken by the honest hsDu towards detection of the malicious hsDu $4$. With the proposed detection scheme the hsDus in the neighborhood of hsDu $4$ detects the operation disruption attack within $3.814$ secs and $0.536$ secs in the sparse and dense communication topology respectively (Figs.~\ref{fig:combined_results}(b),(d)). Following the detection of the malicious hsDu $4$, the remaining $49$ hsDus restart power apportioning to provide the requested demand of $300$ kW to the central core.\\
\hspace*{0.1in} Figs.~\ref{fig:combined_results}(e)-(f) and Figs.~\ref{fig:combined_results}(g)-(h) present the results under $\mathtt{ATTACK}$-$\mathtt{2}$ for the sparse and the dense communication topology respectively. Although the honest hsDus can meet the desired demand of $300$ kW (Table~\ref{tab:rating2}). The hsDu $4$ is able to manipulate the convergence of the power apportioning algorithm reducing the ancillary power support to the central core to $50 \%$ ($150$ kW), Figs.~\ref{fig:combined_results}(e),(g). Thus, a mechanism to detect the malicious actions of the hsDus is necessitated. With the proposed detection scheme the hsDus in the neighborhood of hsDu $4$ detects the operation manipulation attack within $5.016$ secs and $0.838$ secs in the sparse and dense communication topology respectively (Figs.~\ref{fig:combined_results}(b),(d)). Following the detection of the malicious hsDu $4$, the remaining $49$ hsDus restart the power apportioning to provide the requested power demand of $300$ kW to the central core. We remark that the time of detection of the attack is smaller in the dense communication topology as in the dense graph there are more simultaneous detections due to a higher degree of the hsDus.\\
\hspace*{0.1in} Thus, the augmentation of the proposed malicious node detection with the power apportioning leads to a resilient operation of microgrids allowing the hsDus to aggregate their resources in a robust manner to meet the ancillary power demand of the central core.

\section{Conclusion and Future Work}\label{sec:conclusion}
In this article, we considered resilient distributed aggregation of DERs for ancillary service dispatch in 
 the power networks. We developed a distributed detection scheme that allows the DERs to detect and isolate the communication links of the maliciously behaving DERs. We showed that the proposed distributed scheme leads to the detection of the malicious DERs. We tested the performance of the proposed algorithm under two different attack scenarios in the CHIL experiments. With the proposed scheme the DERs detect the malicious behavior within seconds in both the attack scenarios. 
 In the CHIL experiment study, we considered two attack scenarios. An extension of the current work is to test the performance of the developed distributed resilient power apportioning scheme under more attack scenarios. 
% Reference starts
\bibliography{references}

\end{document}